\newcommand{\lse}{\mathsterling}
\DeclareMathOperator*\argmax{arg \, max}		
\newtheorem{lemma}{Lemma}
\def\BibTeX{{\rm B\kern-.05em{\sc i\kern-.025em b}\kern-.08em
    T\kern-.1667em\lower.7ex\hbox{E}\kern-.125emX}}
\begin{document}

\title{Unsupervised Domain Adaptation across
\acs{fmcw} Radar Configurations
Using Margin Disparity Discrepancy}

\author{\IEEEauthorblockN{Rodrigo Hernang{\'o}mez\IEEEauthorrefmark{1},
Igor Bjelakovi{\'c}\IEEEauthorrefmark{1}\IEEEauthorrefmark{3},
Lorenzo Servadei\IEEEauthorrefmark{2}\IEEEauthorrefmark{4},
and
S{\l}awomir~Sta{\'n}czak\IEEEauthorrefmark{1}\IEEEauthorrefmark{3}}
\IEEEauthorblockA{\IEEEauthorrefmark{1}
Fraunhofer Heinrich Hertz Institute, Berlin, Germany, \{firstname.lastname\}@hhi.fraunhofer.de}
\IEEEauthorblockA{\IEEEauthorrefmark{2}Infineon Technologies AG, Munich, Germany, lorenzo.servadei@infineon.com}
\IEEEauthorblockA{\IEEEauthorrefmark{3}Technical University of Berlin, Berlin, Germany}
\IEEEauthorblockA{\IEEEauthorrefmark{4}Technical University of Munich, Munich, Germany}
}

\maketitle

\begin{abstract}
Commercial
radar sensing is gaining relevance
and \acl{ml} algorithms constitute one
of the key components that are enabling
the spread of this radio technology into
areas like surveillance
or healthcare.
However, radar datasets are still scarce
and generalization cannot be yet achieved
for all radar systems, environment conditions
or design parameters. A certain degree of
fine tuning is, therefore, usually required
to deploy machine-learning-enabled radar applications.
In this work, we consider the problem of
unsupervised domain adaptation across
radar configurations in the context of deep-learning
human activity classification using
\acl{fmcw}. For that, we focus on
the theory-inspired technique of \acl{mdd},
which has already been proved successful in the area
of computer vision. Our experiments extend
this technique to radar data, achieving a comparable accuracy
to few-shot supervised approaches for the same
classification problem.
\end{abstract}

\acresetall

\begin{IEEEkeywords}
radar, machine learning, deep learning,
transfer learning, domain adaptation,
human activity classification
\end{IEEEkeywords}

\section{Introduction}
Radar, a well-established technology for several industrial
areas, has recently gained
attention for other commercial applications
like human monitoring,
presence detection or
gesture sensing~\cite{peng2017fmcw,santra2018shortrange,lien2016soli}
due to the production of
small and compact radar sensors~\cite{trotta2021soli}.
Here radar offers
some advantages in comparison with
computer vision approaches, such as
good performance under poor-lighting conditions
or privacy protection (due to the difficulty to
identify individuals from radar images).

Similarly as in the realm of computer vision,
the use of radar
often comes hand in hand with
\ac{ml} techniques,
including deep learning,
to overcome the burden of handcrafted
feature engineering~\cite{santra2020deep}.
Due to the variety
of system design parameters at hand,
such as modulation techniques or bandwidth,
these \ac{ml} algorithms are required to generalize
well under different radar setups.
This need for inter-domain generalization
is common to several \ac{ml} problems and
it has been studied in recent years under
the paradigm of
\ac{da}~\cite{redko2019advances,kouw2018introduction}.

Considered a special case of transfer learning,
\ac{da}
involves modifying
an \ac{ml} estimator that
can be trained with enough data
from a \textit{source domain},
so that its performance
increases when evaluated
with data
originated from a different \textit{target domain}.
Unlike other transfer learning approaches,
here
the mismatch between source and target domain
lies merely in a distinct probability measure
over data rather than in different input or
output spaces~\cite{kouw2018introduction}.

The reasons for \ac{da} are usually related with insufficient
or incomplete data in the target domain, which
can be overcome with the help of data from
the source domain. In
\ac{ml} classification,
the missing information
is often the labels;
this case is referred to as \textit{unsupervised} \ac{da}.
If target data are labeled, we can apply
\textit{supervised} \ac{da} techniques 
instead~\cite{redko2019advances}.

Both supervised and unsupervised \ac{da} methods have already been
investigated in the Radar-\ac{ml} community
to overcome several problems,
including individual patient differences~\cite{yin2019selfadjustable},
aspect angle variations~\cite{chen2019eliminate},
synthetic-to-real adaptation~\cite{li2020unsupervised}
or environmental differences~\cite{stephan2021radar}.
In the case of cross-configuration adaptation,
\citet{khodabakhshandeh2021domain} use supervised techniques
such as \ac{fada}~\cite{motiian2017fewshot}
or \ac{dsne}~\cite{xu2019dsne} to
adapt their trained human activity classifier to new
\ac{fmcw} radar setups using few data.

In this paper, we build on the work in \cite{khodabakhshandeh2021domain}
by applying \ac{mdd}~\cite{zhang2019bridging}.
In that way, we confirm that
this unsupervised technique, which delivers
state-of-the-art results for computer vision
datasets, also works for radar data and thus
enables cross-configuration radar-based
human activity classification based on unlabeled data.

\section{Problem Statement}
Radar-\ac{ml} classification
deals with the evaluation of a group of 
radar features $\boldsymbol{x}\in\mathcal{X}$ obtained from a target
to find the underlying class $y\in\mathcal{Y}$ that best
describes some property of the said target.
The input space
$\mathcal{X}\subset\mathds{R}^m$
is characterized by a dimension
$m$ that depends on the radar technology and preprocessing
steps, while the label space is defined as
$\mathcal{Y}=\left\lbrace 1,\ldots,k\right\rbrace$,
with $k$ being the number of classes.

In order to achieve this classification, one has first
to find a classifier $h$ that maps $\boldsymbol{x}$
into $y$.
The \ac{ml} approach assumes a sufficiently
large amount
of data available conveying information both about the inputs
and the class so that it can be used to train
$h$ among a restricted hypothesis class $\mathcal{H}$.
This dataset consists of a sequence of pairs of
features and labels, i.e.
$\left\lbrace\left(\boldsymbol{x}_i, y_i\right)\right \rbrace^n_{i=1}$,
that have been previously sampled
from a certain domain $\mathcal{D}$,
defined to be
\begin{equation}
    \mathcal{D}=\left(\mathcal{X},\mathcal{Y},p_{\mathcal{D}}\right),
\end{equation}
with an associated probability measure 
$p_{\mathcal{D}}$ over
$\mathcal{X}\times\mathcal{Y}$.
Here and hereafter, we write $\mathbf{x}$
and $\mathrm{y}$ in their upright form whenever
we refer to the random variables related to $p_{\mathcal{D}}$,
and not to its realizations.

By choosing an objective loss function
\begin{equation}
	\ell:\;\mathcal{H}\times\mathcal{X}\times\mathcal{Y}
	\rightarrow\mathds{R}_{0+}
\end{equation}
and minimizing it over the hypothesis class $\mathcal{H}$
with a suitable optimization method,
we can train an $h$ that performs well for the
available data.
The performance of $h$
can thus be measured by the \textit{risk} associated with the loss for
a domain $\mathcal{D}$. This risk $\mathcal{L}_{\mathcal{D}}$
represents the expected
value of the loss of $h$ for $p_{\mathcal{D}}$:

\begin{equation}
	\mathcal{L}_{\mathcal{D}}\left(h\right)=
	\mathds{E}_{\mathcal{D}}
	\ell\left(h, \left(\mathbf{x},\mathrm{y}\right)\right).
\end{equation}

By assuming the indicator function
$\mathds{1}_{h\left(\boldsymbol{x}\right)\neq y}$
to be the loss, we obtain the \textit{0-1 error}
$\textrm{err}_{\mathcal{D}}
    \left(h\right)
    \triangleq
    \mathds{E}_{\mathcal{D}}
    \mathds{1}_{h\left(\mathbf{x}\right)\neq \mathrm{y}}$.
In practice, we do not have access to $p_{\mathcal{D}}$,
so we resort to its empirical approximation
$\mathcal{L}_{\widehat{\mathcal{D}}}\left(h\right)
	\triangleq
	\sum_{i=1}^{n}
	{\ell\left(h, \left(\boldsymbol{x}_i,y_i\right)\right)}/n$
for a dataset $\widehat{\mathcal{D}}$
with $n$ samples drawn from $\mathcal{D}$.

If generalization is achieved,
$h$ will also behave well for
unseen data as long as it is drawn from the same domain.
Unfortunately, this assumption cannot always be guaranteed.
It is often the case that training data have been drawn from
a \textit{source domain} $\mathcal{S}$ but we would like to leverage
the trained classifier for a different \textit{target domain} $\mathcal{T}$.
Depending on how dissimilar $\mathcal{S}$ and $\mathcal{T}$ are,
the performance of the trained classifier can degrade
significantly.
In our specific problem, this \textit{domain shift}
is given by the choice of different
\ac{fmcw} settings
and presents an additional challenge
in the lack of the labels
for the training data from $\mathcal{T}$.
The absence of
labeled target data
makes it necessary to apply
\textit{unsupervised} \ac{da}.
In this paper, we explore
this possibility by using
\acf{mdd}~\cite{zhang2019bridging}.

\subsection{\acl{mdd}}

In order to use \ac{mdd},
we assume a
hypothesis class induced by a space $\mathcal{F}$
of \textit{scoring functions} $f:\:\mathcal{X}\mapsto\mathds{R}^k$.
We also introduce the shorthand $f_{y}\left(\boldsymbol{x}\right)$
to refer to the $y$-th component of $f\left(\boldsymbol{x}\right)$.
The hypothesis class is given by
\begin{equation}
    \mathcal{H}\triangleq
    \left\lbrace
    h_f:\:\boldsymbol{x}\mapsto\argmax_{y\in\mathcal{Y}}
    {f_y\left(\boldsymbol{x}\right)}\mid
    f\in\mathcal{F}\right\rbrace\,.
\end{equation}

\ac{mdd} has been developed by
\citet{zhang2019bridging}
as a practical algorithm
based on the concept of discrepancy distance by
\citet{mansour2009domain}.
For that,
they define the \textit{margin loss}
$\textrm{err}^{\left(\rho\right)}
_{{\mathcal{D}}}\left(f\right)$ as
\begin{equation}
    \textrm{err}
    ^{\left(\rho\right)}_{{\mathcal{D}}}
    \left(f\right)
    \triangleq
	\mathds{E}_{\mathcal{D}}
    {\Phi^{\left(\rho\right)}\circ
    \phi_f
    \left(\mathbf{x},
    \mathrm{y}\right)}\,,
\end{equation}\begin{equation}
    \phi_f\left(\boldsymbol{x}
    ,y\right)
    \triangleq\frac{1}{2}
    \left(f_y\left(\boldsymbol{x}\right)
    -\max_{y'\neq y}{
    f_{y'}\left(\boldsymbol{x}\right)}
    \right)\,,
\end{equation}
\begin{equation}\label{eq:margin-func}
    \Phi^{\left(\rho\right)}\left(x\right)
    \triangleq
    \begin{cases}
      0 & \rho\leq x\\
      1-x/\rho & 0\leq x\leq\rho\\
      1 & x\leq0
    \end{cases}\,,   
\end{equation}
and the true and empirical
\textit{margin disparity}
between two scoring functions $f'$ and $f$ as
\begin{equation}
    \textrm{disp}
    ^{\left(\rho\right)}_{{\mathcal{D}}}
    \left(f',f\right)
    \triangleq
	\mathds{E}_{\mathcal{D}}
    {\Phi^{\left(\rho\right)}\circ
    \phi_{f'}
    \left(\mathbf{x},
    h_f\left(\mathbf{x}\right)\right)}\,,
\end{equation}
\begin{equation}
    \textrm{disp}
    ^{\left(\rho\right)}_{\widehat{\mathcal{D}}}
    \left(f',f\right)
    \triangleq
	\frac{1}{n}
	\sum_{i=1}^{n}
    {\Phi^{\left(\rho\right)}\circ
    \phi_{f'}
    \left(\boldsymbol{x}_i,
    h_f\left(\boldsymbol{x}_i\right)\right)}\,,
\end{equation}
to finally formulate the following minimax
optimization problem:

\begin{equation}\label{eq:problem}
    \begin{gathered}
    \min_{f\in\mathcal{F}}
    {\textrm{err}
    ^{\left(\rho\right)}_{\widehat{\mathcal{S}}}
    \left(f\right)+
    d^{\left(\rho\right)}
    _{f,\mathcal{F}}
    \left(
    \widehat{\mathcal{S}},
    \widehat{\mathcal{T}}\right)
    },\\
    d^{\left(\rho\right)}
    _{f,\mathcal{F}}
    \left(
    {\mathcal{S}},
    {\mathcal{T}}\right)
    \triangleq
    \sup_{f'\in\mathcal{F}}
    {\left(\textrm{disp}
    ^{\left(\rho\right)}_{{\mathcal{T}}}
    \left(f',f\right)-
    \textrm{disp}
    ^{\left(\rho\right)}_{{\mathcal{S}}}
    \left(f',f\right)\right)}\,.
    \end{gathered}
\end{equation}
Following the principles
of unsupervised
\ac{da},
the  \ac{mdd} term
$d^{\left(\rho\right)}
    _{f,\mathcal{F}}$
does not make use of
any labels $y_i$.
Furthermore, the solution to \eqref{eq:problem}
minimizes the 0-1 error of
$h_f$ in the target domain,
as
the authors of~\cite{zhang2019bridging}
prove with the following theoretical bound:

\begin{equation}\label{eq:bound}
    \textrm{err}_{\mathcal{T}}
    \left(h_f\right)\leq
    \textrm{err}^{\left(\rho\right)}_{\mathcal{S}}
    \left(f\right)+
    d^{\left(\rho\right)}
    _{f,\mathcal{F}}
    \left(
    {\mathcal{S}},
    {\mathcal{T}}\right)+\lambda\,,
\end{equation}
where $\lambda$ is the ideal combined
margin loss:

\begin{equation}
    \lambda=
    \min_{f^{*}\in\mathcal{F}}
    {\left\lbrace
    \textrm{err}^{\left(\rho\right)}_{\mathcal{S}}
    \left(f^{*}\right)+
    \textrm{err}^{\left(\rho\right)}_{\mathcal{T}}
    \left(f^{*}\right)
    \right\rbrace}\,.
\end{equation}

The bound in \eqref{eq:bound} can also be expressed in terms
of empirical measures rather than true probability measures
by the addition of
Rademacher complexity terms~\cite[Chapter 3]{mohri2018foundations}.

Despite the interesting properties
of \ac{mdd},
$\Phi^{\left(\rho\right)}\circ\phi_f$ is non-smooth, non-convex
and its training causes vanishing and exploding gradients, which leads \citet{zhang2019bridging}
to work with
the cross-entropy loss instead.
For this, they map
$f\left(\boldsymbol{x}\right)$ to the
$k$-simplex via the softmax function
$\sigma$, as it is customary in deep learning,
where the elements of $\sigma\left(\boldsymbol{z}\right)$
are given by
\begin{equation}
    \sigma_j\left(\boldsymbol{z}\right)
    \triangleq
    \frac{\exp{z_j}}
    {\sum_{i=1}^k
    {\exp{z_i}}},\;\;\text{for }
    j=1,\ldots,k\,.
\end{equation}
The composition
of the cross-entropy loss with the softmax yields the
\ac{lse}
loss $\lse_f$:

\begin{multline}\label{eq:lse}
    \lse_f\left(\boldsymbol{x},y
    \right)\triangleq
    H\left(\mathds{1}_{j=y},
    \sigma\left(
    f\left(\boldsymbol{x}\right)\right)\right)=
    -\log{\sigma_y\left(
    f\left(\boldsymbol{x}\right)\right)}\\
    =\log{\sum_{y'\in\mathcal{Y}}
    {\exp{\left(f_{y'}\left(\boldsymbol{x}\right)-
        f_{y}\left(\boldsymbol{x}\right)\right)}}}\,.
\end{multline}

\citet{zhang2019bridging} propose to use
$\lse_f$ instead of
$\Phi^{\left(\rho\right)}\circ\phi_f$
for
$\textrm{err}^{\left(\rho\right)}_{\widehat{\mathcal{S}}}$
and
$\textrm{disp}^{\left(\rho\right)}_{\widehat{\mathcal{S}}}$
in \eqref{eq:problem}.
As for
$\textrm{disp}^{\left(\rho\right)}_{\widehat{\mathcal{T}}}$,
they use the adversarial loss
$\widetilde{\lse}_{f}$ proposed by
\citet{goodfellow2014generative}, i.e.
\begin{equation}
   \widetilde{\lse}_{f}\left(\boldsymbol{x},y
    \right)\triangleq
    \log{\left(1-\sigma_y\left(
    f\left(\boldsymbol{x}\right)\right)\right)}\,,
\end{equation}
so that their \ac{mdd} ultimately becomes
\begin{multline}\label{eq:mdd_pract}
\widetilde{d}^{\left(\gamma\right)}
    _{f,\psi,\mathcal{F}}
    \left(
    {\widehat{\mathcal{S}}},
    {\widehat{\mathcal{T}}}\right)
    \triangleq
    \max_{f'\in\mathcal{F}}
    \mathds{E}_{\boldsymbol{x}^t
    \sim\widehat{\mathcal{T}}}{
      \widetilde{\lse}_{f'}\left(\psi\left(\boldsymbol{x}^t\right),
      h_f\left(\psi\left(\boldsymbol{x}^t\right)\right)\right)}\\-
      \gamma\mathds{E}_{\boldsymbol{x}^s
    \sim\widehat{\mathcal{S}}}{
      \lse_{f'}\left(\psi\left(\boldsymbol{x}^s\right),
      h_f\left(\psi\left(\boldsymbol{x}^s\right)\right)\right)}
\end{multline}
for a \textit{margin factor} $\gamma>0$
and a feature extractor $\psi$ that levels
the min-player to the max-player~\cite{zhang2019bridging}
(A concrete example of $\psi$ is given
in \eqref{eq:feat_ext}, \Cref{sec:experiment}).
The authors explain
that this is equivalent to
the use of the margin loss with a margin
$\rho=\log{\gamma}$
and that the problem is still
solved for $\mathcal{S}=\mathcal{T}$~\cite{zhang2019bridging}.

In addition to the results
in~\cite{zhang2019bridging},
we observe that the use
of the recently proposed soft-margin softmax
\cite{liang2017softmargin} instead of
$\sigma$
in \eqref{eq:lse} provides
an upper bound for
$\textrm{err}^{\left(\rho\right)}_{\mathcal{S}}$.
The entries of the soft-margin softmax
$\sigma^{\left(\rho\right)}\left(\boldsymbol{z}\right)$
are defined as
\begin{multline}
    \sigma^{\left(\rho\right)}_j\left(\boldsymbol{z}\right)
    \triangleq
    \frac{\exp{\left(z_j-\rho\right)}}
    {\exp{\left(z_j-\rho\right)}+\sum_{i\neq j}
    {\exp{z_i}}},\\
    \text{for }j=1,\ldots,k;\;\;\rho\in\mathds{R}_{+}
\end{multline}
and this induces the soft-margin cross-entropy loss $\lse_f^{\left(\rho\right)}$:
\begin{multline}
	\lse_f^{\left(\rho\right)}
	\left(\boldsymbol{x},y\right)
	\triangleq -\log{\sigma^{\left(\rho\right)}_y\left(
    f\left(\boldsymbol{x}\right)\right)}\\
    =
	\log{\sum_{y'\in\mathcal{Y}}
    {\exp{\left(f_{y'}\left(\boldsymbol{x}\right)-
        f_{y}\left(\boldsymbol{x}\right)
        +\rho\cdot\mathds{1}_{y'\neq y}\right)}}}\,.
\end{multline}
Likewise, a soft-max adversarial loss can also
be defined as
\begin{equation}
\widetilde{\lse}^{\left(\rho\right)}_{f}
\left(\boldsymbol{x},y
\right)\triangleq
\log{\left(1-\sigma^{\left(\rho\right)}_y\left(
f\left(\boldsymbol{x}\right)\right)\right)}\,.
\end{equation}

We prove this soft-margin-based
bound with the help of the following Lemma,
which motivates us to
investigate
$\lse^{\left(\rho\right)}_f$
further in \cref{sec:experiment}.

\begin{lemma}\label{th:lse}
The soft-max cross entropy bounds the margin loss as
\begin{equation}\label{eq:lemma}
    \Phi^{\left(\rho\right)}\circ     \phi_f
    \left(\boldsymbol{x},y\right)\leq
    \frac{1}{2\rho}
    \lse^{\left(2\rho\right)}_f
    \left(\boldsymbol{x},y\right)\,.
\end{equation}
\end{lemma}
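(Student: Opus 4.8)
The plan is to collapse the inequality to a single scalar comparison and then dispatch that comparison by elementary, calculus-free estimates. First I would write $t\triangleq\phi_f(\boldsymbol{x},y)$, so that $2t=f_y(\boldsymbol{x})-\max_{y'\neq y}f_{y'}(\boldsymbol{x})$ by the definition of $\phi_f$. Expanding the closed form of the soft-margin cross entropy with margin $2\rho$, the $y'=y$ term equals $1$ and every other term equals $\exp\!\left(f_{y'}(\boldsymbol{x})-f_y(\boldsymbol{x})+2\rho\right)>0$; keeping only the largest such term and using monotonicity of the logarithm gives
\[
    \lse^{\left(2\rho\right)}_f\left(\boldsymbol{x},y\right)
    \;\geq\; \log\!\left(1+\exp\!\left(2\rho-2t\right)\right)
    \;=\; \log\!\left(1+\exp\!\left(2\left(\rho-t\right)\right)\right).
\]

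It then suffices to establish the scalar inequality $\Phi^{\left(\rho\right)}(t)\leq\frac{1}{2\rho}\log\!\left(1+\exp(2(\rho-t))\right)$ for every $t\in\mathds{R}$ and $\rho>0$, which I would obtain from two observations. On the one hand, checking the three branches of \eqref{eq:margin-func} gives $\Phi^{\left(\rho\right)}(t)\leq\max\{0,\,1-t/\rho\}$ for all $t$ (with equality whenever $t\geq0$). On the other hand, the right-hand side is strictly positive, and since $1+\exp(s)\geq\exp(s)$ with $s=2(\rho-t)$ one also has $\frac{1}{2\rho}\log\!\left(1+\exp(2(\rho-t))\right)\geq 1-t/\rho$. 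Combining these yields $\frac{1}{2\rho}\log\!\left(1+\exp(2(\rho-t))\right)\geq\max\{0,\,1-t/\rho\}\geq\Phi^{\left(\rho\right)}(t)$. Chaining this with the lower bound on $\lse^{\left(2\rho\right)}_f$ from the previous step, and multiplying through by the positive factor $\frac{1}{2\rho}$, produces exactly \eqref{eq:lemma}.

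There is no genuinely hard step here; the care needed is purely bookkeeping. One has to keep the inequality directions consistent — the log-sum-exp term sits on the larger side of \eqref{eq:lemma}, so it is legitimate to lower-bound it by discarding the non-maximal exponentials, whereas $\Phi^{\left(\rho\right)}\circ\phi_f$ must be upper-bounded. One also has to line up the margin parameters: it is precisely the choice of margin $2\rho$ in $\lse^{\left(2\rho\right)}_f$ that produces the exponent $2(\rho-t)$, which then meshes with the prefactor $\frac{1}{2\rho}$ and with the factor $\frac12$ built into $\phi_f$. Finally, taking a maximum over $y'\neq y$ presupposes $k\geq2$, which is already implicit in the very definition of $\phi_f$.
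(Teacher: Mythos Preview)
Your argument is correct and follows essentially the same route as the paper: both bound $\Phi^{(\rho)}(t)$ above by the hinge-type function $\max\{0,1-t/\rho\}$ and then bound the log-sum-exp below by that same quantity, the only cosmetic difference being that the paper names the intermediate object the generalized hinge loss $\hslash_f^{(2\rho)}$ and applies $\log\sum\exp\ge\max$ directly, whereas you first reduce to the two-term softplus $\log(1+\exp(2(\rho-t)))$ before passing to the same hinge bound.
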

\begin{proof}
First, let us recall the generalized hinge 
loss~\cite{shalev-shwartz2014understanding}:
\begin{equation}
    \hslash_f^{\left(\theta\right)}
    \left(\boldsymbol{x},y\right)
    \triangleq
    \max_{y'\in\mathcal{Y}}
    {\left(f_{y'}\left(\boldsymbol{x}\right)-
        f_{y}\left(\boldsymbol{x}\right)
        +\theta\cdot\mathds{1}_{y'\neq y}\right)}\,.
\end{equation}
Noting that the term within $\max_{y'\in\mathcal{Y}}$
is null $\forall y'=y$, we have
\begin{multline}
\label{eq:hinge}
    \hslash_f^{\left(2\rho\right)}
    \left(\boldsymbol{x},y\right)=
    \max\left\lbrace 0,
    \max_{y'\neq y}{
    f_{y'}\left(\boldsymbol{x}\right)
    -f_y\left(\boldsymbol{x}\right)
    +2\rho}
    \right\rbrace=\\
    \max\left\lbrace 0,
    2\rho- 2\phi_f\left(\boldsymbol{x},y\right)
    \right\rbrace=
    2\rho\max\left\lbrace 0,
    1-\phi_f\left(\boldsymbol{x},y\right)/\rho
    \right\rbrace\,.
\end{multline}

The last expression in \eqref{eq:hinge}
can be derived from \eqref{eq:margin-func}
if we set $1-x/\rho$ instead of 1
for $x\leq0$, hence
\begin{equation}\label{eq:hinge-ineq}
    \Phi^{\left(\rho\right)}\circ\phi_f
\left(\boldsymbol{x},y\right)\leq
\frac{1}{2\rho}
\hslash^{\left(2\rho\right)}_f
\left(\boldsymbol{x},y\right)
\end{equation}
and the proof is concluded using the fact that
\begin{equation}\label{eq:lse-ineq}
    \log{\sum_{a\in\mathcal{A}}{\exp{a}}}
\geq\max_{a\in\mathcal{A}}{{a}}
\end{equation}
for any finite set
$\mathcal{A}$.
\end{proof}

Taking the expectation w.r.t.
$p_{\mathcal{S}}$ in \eqref{eq:lemma},
we finally obtain
\begin{equation}\label{eq:loss-bound}
    \textrm{err}^{\left(\rho\right)}_{\mathcal{S}}
    \left(f\right)
    \leq\frac{1}{2\rho}
    \mathds{E}_{\mathcal{S}}
    {\lse^{\left(2\rho\right)}_f
    \left(\mathbf{x},\mathrm{y}\right)}\,.
\end{equation}
Despite the gap introduced by \cref{eq:hinge-ineq,eq:lse-ineq}, we note that \cref{eq:loss-bound} delivers convincing bounds
for a small $\textrm{err}^{\left(\rho\right)}_{\mathcal{S}}
\left(f\right)$,
which can be achieved by training
$f$ under enough samples from $\mathcal{S}$.

\section{Experiments}
\label{sec:experiment}

\subsection{Setup}\label{sec:setup}

Similarly as in \cite{khodabakhshandeh2021domain}, we focus on human activity recognition
using data that have been measured
simultaneously with 4 different
60-GHz \ac{fmcw} radar sensors.
For these measurements,
2 male subjects were recorded separately while
performing 5 different activities:
\textit{standing}, \textit{waving},
\textit{walking}, \textit{boxing}
or \textit{boxing while walking}.
Each one of the radar sensors was configured with a different set of radar parameters,
presented in \cref{table:configs} as \textbf{I} to \textbf{IV}.
Here, the divergent parameters of the different configurations (marked in bold)
affect the temporal and range resolution of the
\ac{rdm} sequences,
as well as the maximum observable scope of the latter.
From these configurations, \textbf{I} has been taken over from~\cite{khodabakhshandeh2021domain}.

\begin{table}[!t]    
    \caption{Radar configuration parameters}
    \begin{center}
\begin{tabular}{llrrrr}
    \multicolumn{2}{l}{Configuration name}  & \textbf{I} & \textbf{II}
                            & \textbf{III} & \textbf{IV}  \\ \hline
    Chirps per frame &    $n_c$   & 64       & 64       & 64       & 64        \\
    Samples per chirp &  $n_s$    & 256      & 256      & \textbf{128}      & 256        \\
    Bandwidth & [\si{\giga\Hz}]                   & 2       & \textbf{1}        & 2        & 2          \\
    Frame period & [\si{\milli\second}]      & \textbf{50}       & 32       & 32      &     32     \\
    Chirp to chirp time & [\si{\micro\second}]    & 250     & 250   & 250   & 250   \\
    Range resolution & [\si{\centi\meter}]  & 7.5      & \textbf{15}     & 7.5     & 7.5       \\
    Max. range & [\si{\meter}]         & 6.2   & \textbf{12.5}   & \textbf{4.8}   & 6.2     \\
    Max. speed & [\si{\meter\per\second}]        & 5.0     & 5.0     & 5.0     & 5.0       \\
    Speed resolution & [\si{\meter\per\second}] & 0.15     & 0.15     & 0.15     & 0.15      
\end{tabular}
    \label{table:configs}
    \end{center}
\end{table}

\begin{figure}[!b] 
    \centering
    \def\svgwidth{0.5\textwidth}
    \footnotesize
    \import{img}{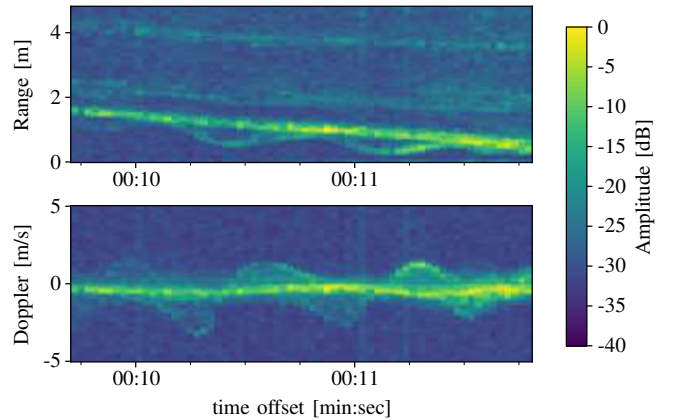}
    \caption{Range and Doppler spectrogram
    for \textit{boxing while walking}.}
    \label{fig:spectrograms}
\end{figure}

The input features $\boldsymbol{x}$ 
in~\cref{fig:spectrograms}
comprise both
range ($\boldsymbol{x}_r$) and
Doppler ($\boldsymbol{x}_d$) information, i.e.:
\begin{equation}\label{eq:x}
    \boldsymbol{x}=\left(\boldsymbol{x}_r, \boldsymbol{x}_d\right),\;\;\boldsymbol{x}_r, \boldsymbol{x}_d\in\mathds{R}^{64\times128}.
\end{equation}
The radar preprocessing to produce $\boldsymbol{x}$ is also based
on~\cite{khodabakhshandeh2021domain}, with the notable addition of cropping and resampling of the
spectrograms to ensure
the dimensions in~\eqref{eq:x}
and the scopes of
\SIrange{0}{2}{\second},
\SIrange{0.0}{4.8}{\meter} and
\SIrange{-5}{5}{\meter\per\second}
for the time, range and Doppler
dimensions, respectively.

Despite all the preprocessing,
the differences on resolution
still yield a domain shift across configurations that we
try to tackle with \ac{mdd}.
For that, we take both spectrograms as an input to our feature extractor
$\psi$. Here we choose the same topology as in~\cite{khodabakhshandeh2021domain};
that is, a pair of twin branches $\psi_r$ and $\psi_d$, each one
consisting of 3 convolutional layers for which we concatenate the outputs:
\begin{equation}\label{eq:feat_ext}
    \psi\left(\boldsymbol{x}\right)\equiv
    \left(\psi_r\left(\boldsymbol{x}_r\right),
        \psi_d\left(\boldsymbol{x}_d\right)\right).
\end{equation}

Furthermore, we employ a bottleneck layer of 512 nodes and choose our
hypothesis space $\mathcal{F}$ to be consistent
with the structure of the fully connected layers from~\cite{khodabakhshandeh2021domain}.

Motivated by \cref{th:lse},
we replace the vanilla loss terms in
\eqref{eq:mdd_pract}
by the soft-margin losses
$\lse^{\left(\rho\right)}$ and
$\widetilde{\lse}^{\left(\rho\right)}$
with
\begin{equation}
    \rho=2\log{2}\simeq1.386
\end{equation}
and we set the margin factor $\gamma=1$
since the margin
$\rho$ is already included in the loss,
effectively rendering \ac{mdd} as
\begin{multline}\label{eq:mdd_sm}
\widehat{d}^{\left(\rho\right)}
    _{f,\psi,\mathcal{F}}
    \left(
    {\widehat{\mathcal{S}}},
    {\widehat{\mathcal{T}}}\right)
    \triangleq
    \max_{f'\in\mathcal{F}}
    \mathds{E}_{\boldsymbol{x}^t
    \sim\widehat{\mathcal{T}}}{
      \widetilde{\lse}^{\left(\rho\right)}_{f'}
      \left(\psi\left(\boldsymbol{x}^t\right),
      h_f\left(\psi\left(\boldsymbol{x}^t\right)\right)\right)}\\-
      \mathds{E}_{\boldsymbol{x}^s
    \sim\widehat{\mathcal{S}}}{
      \lse^{\left(\rho\right)}_{f'}
      \left(\psi\left(\boldsymbol{x}^s\right),
      h_f\left(\psi\left(\boldsymbol{x}^s\right)\right)\right)}\,.
\end{multline}

Other than that, we leave all hyperparameters
to the same values as in~\cite{zhang2019bridging}
and adapt their implementation as
in~\cref{fig:mdd}.
This has been written in Pytorch as an instance of
adversarial training, where
a \ac{grl} is used to minimize
the \ac{mdd} loss term on
$\psi$
while maximizing on $f'$ as
the minimax formulation in \eqref{eq:problem}
mandates.

\begin{figure}[!b] 
    \centering
    \def\svgwidth{0.5\textwidth}
    \footnotesize
    \import{img}{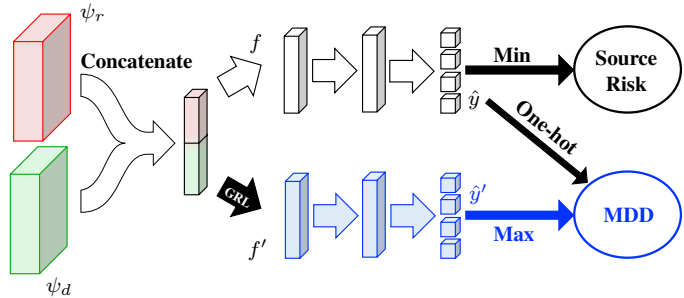}
    \caption{\ac{mdd} adversarial network,
    adapted from~\cite{zhang2019bridging}.}
    \label{fig:mdd}
\end{figure}

The number of samples per dataset lies over 1150 samples for the train sets
and over 350 samples for the test sets.

\subsection{Results}

We have run unsupervised training experiments for all possible domain pairs within
configurations \textbf{I}-\textbf{IV} and summarized the resulting test accuracies
on the test sets
in \cref{table:acc}.

The figures follow the same trend as the
results of \ac{mdd} in
computer vision datasets as
reported by \citet{zhang2019bridging}.
and presented in~\cref{table:mdd}.
Here one can compare, for instance,
the results using \ac{mdd}
for our \ac{fmcw} data with the minimum
and maximum accuracies obtained
for Office-31,
a dataset containing
4,652 images from three domains~\cite{saenko2010adapting}.
It is also noteworthy that the highest accuracy
for \ac{fmcw} exceeds both that of
the Office-Home dataset
(15,500 images from four domains)~\cite{venkateswara2017deep}
and the VisDA dataset
(280K real and synthetic images)~\cite{peng2017visda}.

Our results are also comparable with the
\ac{fada} method for \ac{fmcw}-based
human activity recognition in~\cite{khodabakhshandeh2021domain},
which increases the baseline accuracy
of
\SIrange[range-phrase=--]{50}{60}{\percent}
without \ac{da} to
\SIrange[range-phrase=--]{88}{92}{\percent}.
Here it is important to note
that \ac{mdd} is,
in contrast to \ac{fada},
an unsupervised technique and
thus it presents the advantage
of working with unlabeled target data.

\begin{table}[!t]    
\caption{Test accuracy [\si{\percent}] of \ac{mdd} for
\ac{fmcw} data}
\begin{center}
\begin{tabular}{lrcccc}
 & \multicolumn{5}{c}{Target configuration} \\
 \multirow{5}{*}{\rotatebox[origin=c]{90}{
\begin{tabular}[c]{@{}c@{}}Source\\ configuration\end{tabular}}}
 &     & I    & II   & III  & IV   \\
 & I   & -    & 91.4 & 90.6 & 88.3 \\
 & II  & 90.9 & -    & 89.8 & 89.4 \\
 & III & 89.4 & 90.4 & -    & 89.4 \\
 & IV  & 92.5 & 85.8 & 90.9 & -   
\end{tabular}
\label{table:acc}
\end{center}
\end{table}

\begin{table}[!t]    
\caption{Min. and max. accuracy [\si{\percent}]
of \ac{mdd} for different datasets}
\begin{center}
\begin{tabular}{cccc}
Office-31  & Office-Home & VisDa & FMCW      \\
72.2-100.0 & 53.6-82.3   & 74.6 (single value)  & 85.8-92.5
\end{tabular}
\label{table:mdd}
\end{center}
\end{table}

\begin{table}[!t]    
\caption{Average accuracy comparison [\si{\percent}] of the original
\ac{mdd} implementation and the soft-margin version}
\begin{center}
\begin{tabular}{llll}
                & Office-31 & Office-Home & \ac{fmcw}  \\
Original \ac{mdd}    & 88.9      & 68.1        & 89.525 \\
Soft-margin \ac{mdd} & 88.3      & 67.6        & 89.9  
\end{tabular}
\label{table:sm}
\end{center}
\end{table}

We have also compared the
average accuracy across domain combinations
for the
original implementation
of \ac{mdd} in \eqref{eq:mdd_pract}
with the average accuracy for
our soft-margin version in \eqref{eq:mdd_sm},
taking the Office-31 and Office-Home datasets
as well as our \ac{fmcw} radar
data. The results, which
can be seen in \cref{table:sm}, show little
difference between both implementations.

\section{Conclusion}
In this work, we confirm that
the \ac{mdd} algorithm,
which has already shown promising results
for unsupervised \ac{da} in
the area of computer vision, is also suitable
for radar data across different \ac{fmcw} parameters.
The obtained accuracy can become as high
as for some supervised techniques~\cite{khodabakhshandeh2021domain}
while using a much more limited dataset, paving thus the way for a prompt deployment of
radar-based deep learning applications with
custom configurations.

In our experiments,
we observe that
the use of the soft-margin cross entropy loss provides similar results as the original implementation by
\citet{zhang2019bridging}.
Since the motivation of \ac{mdd} is
to bring the algorithms closer to the analytical
performance bounds of \ac{da}, we see
potential in this alternative loss function
to bridge the gap between theory and practice.

\section*{Acknowledgment}

We gratefully acknowledge the support of NVIDIA Corporation with the donation of the DGX-1
used for this research. We would also like to thank
Avik Santra from Infineon Technologies AG
for his support throughout this work.

\bibliographystyle{IEEEtranN}
\bibliography{mdd_radar}

\end{document}